\newtheorem{theorem}{Theorem}[section]
\newtheorem{prop}[theorem]{Proposition}
\newtheorem{cor}[theorem]{Corollary}
\newtheorem{lemma}[theorem]{Lemma}
\newtheorem{remark}[theorem]{Remark}
\numberwithin{equation}{section}
\begin{document}

\title{ Constant Composition Codes as \\Subcodes of Linear Codes  }

\author{Long Yu{\thanks{Corresponding author.
\newline \indent ~~Email addresses:~longyu@mails.ccnu.edu.cn~(Long Yu),   lxs6682@163.com~(Xiusheng Liu).},~~ Xiusheng Liu}}
\date{  School of Mathematics and Physics, Hubei Polytechnic University, Huangshi,  435003, China}
\maketitle

\begin{abstract}
In this paper, on  one hand, a class of linear codes with one or two weights is obtained. Based on these linear codes, we construct two classes of constant composition codes, which includes optimal constant composition codes depending on LVFC bound.   On the other hand, a class of constant composition codes is derived from  known linear codes.
\end{abstract}


{\bf Key Words}\ \  Linear codes, Gauss sum, Constant composition codes\\

\section{Introduction}
Let $p$ be an odd prime and $q$ be a power of $p$. A linear $[n,k,d]$  code over the finite field $\mathbb{F}_q$ is a $k$-dimensional subspace of $\mathbb{F}_{q}^{n}$ with minimum Hamming distance $d$.  Let $A_i$ denote the number of codewords with Hamming weight $i$ in a linear code $\mathcal{C}$ of length $n$. The weight enumerator of $\mathcal{C}$ is defined by
\[1+A_1X+A_2X^2+\cdots+A_{n}X^{n}.\]
The sequence $(1,A_1,\cdots,A_n)$ is called the weight distribution of the code $\mathcal{C}$.

Let $D=\{d_1,d_2,\cdots,d_n\}\subseteq \mathbb{F}_{p^m}^*$, where $n$, $m$ are  positive integers. Let ${\rm Tr}$ denote the trace function from $\mathbb{F}_{p^m}$ to $\mathbb{F}_p$. We define a linear code of length $n$ over $\mathbb{F}_p$ by
$$\mathcal{C}_D=\{c(a)=({\rm Tr}(ad_1),{\rm Tr}(ad_2),\cdots,{\rm Tr}(ad_n)|a\in \mathbb{F}_{p^m} \}.$$

This construction is generic in the sense that many classes
of known codes could be produced by selecting the defining
set $D\subseteq \mathbb{F}_{p^m}$. So, this  technique could be employed to construct linear codes and cyclic codes in different ways (see \cite{DingNiederreiter2007,DingDing2015,Liyueli2014,Liyue2014,LuoandFeng1,LuoandFeng2,yuliuamc,yu-liu2016,
Zeng2012,ZengHujiang2010,zhengwanghzeng2015,zhengwangyuliu2014,zhouding2014,ZhouLiFanH2016}, and references theirin).

Let $S=\{s_0,\cdots,s_{q-1}\}$ be an alphabet of size $q$.  An $[n,M,d,(\omega_0,\omega_1,\cdots,\omega_{q-1})_q]$ constant composition code(CCC) is a subset $C\subset S^n$ of size $M$, minimal distance $d$ and where the element $s_i$ occurs exactly $\omega_i$ times in each codeword in $C$.

Constant composition codes were studied
already in the 1960s. Both algebraic and combinatorial constructions of CCCs
have been proposed. For further information, the reader is referred to \cite{ChuColbournDukes2004,Ding2008,Dingyin2005,Dingyuan2005,LuoHellseth2011}.

The objective of this paper is to construct two classes of CCCs. 
A new constructions CCCs which are subcodes of linear
codes (not cyclic codes) is proposed. On one hand, 
we   define a class of linear codes $\mathcal{C}_{D(\alpha)}$ by a set $D(\alpha)$. The corresponding exponential sums have close relationship with Gauss sums. By using the technology of finite field, the parameters of linear codes $\mathcal{C}_{D(\alpha)}$ are obtained for all $\alpha\in \mathbb{F}_p$ (see Theorem~\ref{th:diyileizhongliang}). Furthermore, we select a kind of set $S$ and construct a class of CCCs whose  parameters is presented (see Theorem~\ref{th:yileiccc}). Some of these CCCs are optimal in the sense that they meet LFVC bound. On the other hand, a class of CCCs is constructed from known linear codes (see Theorem~\ref{th:erleiccc}).


\section{Preliminaries}
Throughout this paper, we let $q=p^m$, where $m$ is a positive integer. Let $\eta$ and $\overline{\eta}$ be the quadratic multiplicative character on $\mathbb{F}_q$ and $\mathbb{F}_p$, respectively. Let $\chi_1(\cdot)=\zeta_p^{{\rm Tr}(\cdot)}$ and $\overline{\chi}_1=\zeta_p^{(\cdot)}$ be the canonical additive characters on $\mathbb{F}_q$ and $\mathbb{F}_p$, respectively. We define $\eta(0)=0=\overline{\eta}(0)$, then the quadratic Gaussian sum $G(\eta,\chi_1)$ on  $\mathbb{F}_q$ is defined by
\[G(\eta,\chi_1)=\sum_{x\in \mathbb{F}_q}\eta(x)\chi_1(x) ,\]
and  the quadratic Gaussian sum $G(\overline{\eta},\overline{\chi}_1)$ on  $\mathbb{F}_p$ is defined by
\[G(\overline{\eta},\overline{\chi}_1)=\sum_{x\in \mathbb{F}_p}\overline{\eta}(x)\overline{\chi}_1(x) .\]

The following  results are well known.
\begin{lemma}\cite{Lidl R}
Let the notations be given as above, we have
$$G(\eta,\chi_1)=(-1)^{m-1}\sqrt{-1}^{(\frac{p-1}{2})^2m} \sqrt{q}$$
and
$$G(\overline{\eta},\overline{\chi}_1)=\sqrt{-1}^{(\frac{p-1}{2})^2} \sqrt{p}.$$
\end{lemma}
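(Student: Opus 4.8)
The statement is the classical evaluation of quadratic Gauss sums, going back to Gauss in the prime-field case. I would organize the proof in three steps: pin down the absolute value, reduce the extension-field formula to the prime-field one, and finally settle the sign over $\mathbb{F}_p$.

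First I would compute $G(\eta,\chi_1)^2$. Since $\eta$ is real-valued, $\overline{G(\eta,\chi_1)}=\eta(-1)G(\eta,\chi_1)$, and expanding $G(\eta,\chi_1)\overline{G(\eta,\chi_1)}=\sum_{x,y\in\mathbb{F}_q^*}\eta(xy^{-1})\chi_1(x-y)$, substituting $z=xy^{-1}$, and using orthogonality of additive characters on the inner sum over $y$ gives $|G(\eta,\chi_1)|^2=q$, hence $G(\eta,\chi_1)^2=\eta(-1)q=(-1)^{(q-1)/2}q$. The same computation over $\mathbb{F}_p$ gives $G(\overline{\eta},\overline{\chi}_1)^2=(-1)^{(p-1)/2}p$. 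So in each case the Gauss sum is a fixed real multiple of $1$ or $\sqrt{-1}$ times the known square root, and only the overall sign is left undetermined.

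Second, to pass from $\mathbb{F}_q$ to $\mathbb{F}_p$ I would invoke the Hasse--Davenport lifting relation. The key observation is that the quadratic character of $\mathbb{F}_q$ is the lift of the quadratic character of $\mathbb{F}_p$ along the norm: since $N_{\mathbb{F}_q/\mathbb{F}_p}(x)=x^{(q-1)/(p-1)}$ while $\eta(x)=x^{(q-1)/2}$ and $\overline{\eta}(y)=y^{(p-1)/2}$, one checks directly that $\overline{\eta}\bigl(N_{\mathbb{F}_q/\mathbb{F}_p}(x)\bigr)=\eta(x)$. Hasse--Davenport then yields $-G(\eta,\chi_1)=\bigl(-G(\overline{\eta},\overline{\chi}_1)\bigr)^m$, and substituting the prime-field value $G(\overline{\eta},\overline{\chi}_1)=\sqrt{-1}^{(\frac{p-1}{2})^2}\sqrt{p}$ and simplifying the powers of $-1$ and $\sqrt{-1}$ produces exactly $G(\eta,\chi_1)=(-1)^{m-1}\sqrt{-1}^{(\frac{p-1}{2})^2m}\sqrt{q}$.

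The remaining step is the genuine obstacle: determining the sign of the prime-field quadratic Gauss sum, which is Gauss's celebrated theorem and not a formal manipulation. I would first rewrite it as a quadratic exponential sum — counting solutions of $x^2=t$ shows $\sum_{x\in\mathbb{F}_p}\zeta_p^{x^2}=\sum_{t\in\mathbb{F}_p}\bigl(1+\overline{\eta}(t)\bigr)\zeta_p^t=G(\overline{\eta},\overline{\chi}_1)$ — and then evaluate $\sum_{x=0}^{p-1}e^{2\pi i x^2/p}$ by Schur's matrix argument: the Fourier matrix $M=(\zeta_p^{jk})_{0\le j,k\le p-1}$ has this sum as its trace, while $M^2=pP$ for an explicit permutation matrix $P$, so from $M^4=p^2 I$ together with the traces of $M,M^2,M^3$ one computes the multiplicities of the eigenvalues $\pm\sqrt{p},\pm i\sqrt{p}$, forcing the value $\sqrt{p}$ when $p\equiv1\pmod 4$ and $i\sqrt{p}$ when $p\equiv3\pmod 4$, i.e. $\sqrt{-1}^{(\frac{p-1}{2})^2}\sqrt{p}$ since $(\frac{p-1}{2})^2\equiv0$ or $1\pmod4$ accordingly. (The same sign also follows analytically from Poisson summation on a theta series.) Either way, essentially all the content of the lemma lies in this sign determination.
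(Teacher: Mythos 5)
Your outline is correct. The paper offers no proof of this lemma at all---it is quoted verbatim from Lidl and Niederreiter \cite{Lidl R} (their evaluation of the quadratic Gauss sum)---and your three-step route (modulus computation giving $G^2=\eta(-1)q$, Davenport--Hasse lifting along the norm to reduce $\mathbb{F}_q$ to $\mathbb{F}_p$, and Gauss's sign determination via the Schur matrix) is exactly the standard proof found in that reference, so there is nothing further to compare; the only caveat is that your sketch leans on Davenport--Hasse and the Schur eigenvalue count as black boxes, which is appropriate here since the paper itself treats the whole statement as a black box.
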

 \begin{lemma}\cite{Lidl R}\label{lem:ercihanshuqiuhe}
Let $\chi$ be a nontrivial additive character of $\mathbb{F}_q$, and let $f(x)=a_2x^2+a_1x+a_0\in \mathbb{F}_q[x]$ with $a_2\neq0$. Then
\[\sum_{x\in \mathbb{F}_q}\chi(f(x))=\chi(a_0-a_1^2/(4a_2))\eta(a_2)G(\eta,\chi).\]
\end{lemma}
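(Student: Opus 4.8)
The plan is to reduce the character sum to a quadratic Gauss sum by completing the square. Since $p$ is odd and $a_2\neq 0$, we may write
\[
f(x)=a_2x^2+a_1x+a_0=a_2\Bigl(x+\frac{a_1}{2a_2}\Bigr)^{2}+\Bigl(a_0-\frac{a_1^2}{4a_2}\Bigr).
\]
The map $y=x+a_1/(2a_2)$ is a bijection of $\mathbb{F}_q$, so
\[
\sum_{x\in\mathbb{F}_q}\chi(f(x))=\chi\Bigl(a_0-\frac{a_1^2}{4a_2}\Bigr)\sum_{y\in\mathbb{F}_q}\chi(a_2y^2),
\]
and it remains to prove that $\sum_{y\in\mathbb{F}_q}\chi(a_2y^2)=\eta(a_2)G(\eta,\chi)$.

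For this I would count preimages under squaring: for every $u\in\mathbb{F}_q$ the number of $y$ with $y^2=u$ equals $1+\eta(u)$, using the convention $\eta(0)=0$. Hence
\[
\sum_{y\in\mathbb{F}_q}\chi(a_2y^2)=\sum_{u\in\mathbb{F}_q}\bigl(1+\eta(u)\bigr)\chi(a_2u)=\sum_{u\in\mathbb{F}_q}\chi(a_2u)+\sum_{u\in\mathbb{F}_q}\eta(u)\chi(a_2u).
\]
Because $\chi$ is nontrivial and $a_2\neq 0$, the first sum vanishes by orthogonality of additive characters. In the second sum substitute $v=a_2u$; since $\eta$ has order $2$ we have $\eta(v/a_2)=\eta(v)\eta(a_2)$, so
\[
\sum_{u\in\mathbb{F}_q}\eta(u)\chi(a_2u)=\eta(a_2)\sum_{v\in\mathbb{F}_q}\eta(v)\chi(v)=\eta(a_2)G(\eta,\chi).
\]
Combining the three displays gives the asserted formula.

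The argument is routine; the only points that need attention are that completing the square requires $\mathrm{char}\,\mathbb{F}_q\neq2$ (guaranteed here since $p$ is odd), that the identity $\#\{y:y^2=u\}=1+\eta(u)$ is correct also at $u=0$, and that the vanishing of $\sum_{u}\chi(a_2u)$ is exactly where the nontriviality of $\chi$ is used. I do not anticipate any genuine obstacle.
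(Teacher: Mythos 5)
Your proof is correct and is essentially the standard argument for this result; the paper itself gives no proof, citing Lidl and Niederreiter, and your completion of the square followed by the preimage count $\#\{y:y^2=u\}=1+\eta(u)$ (valid at $u=0$ under the paper's convention $\eta(0)=0$) and the orthogonality/substitution steps reproduce exactly the proof in that reference. All the hypotheses ($p$ odd, $a_2\neq0$, $\chi$ nontrivial) are used where you say they are, so there is nothing to add.
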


The conclusions of the following two lemmas are easy to obtain.
\begin{lemma}\label{lem:ercitezheng}
If $m$ is odd, then $\eta(a)=\overline{\eta}(a)$ for any $a\in \mathbb{F}_p$. If $m$ is even,  then $\eta(a)=1$ for any $a\in \mathbb{F}_p^*$.
\end{lemma}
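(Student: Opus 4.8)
The claim has two parts, both about how the quadratic character $\eta$ on $\mathbb{F}_q$ restricts to the prime subfield $\mathbb{F}_p$, where $q = p^m$. The key structural fact I would use is that the nonzero squares in $\mathbb{F}_q$ form the unique index-$2$ subgroup of the cyclic group $\mathbb{F}_q^*$, so $\eta(a) = 1$ if and only if $a$ is an $\tfrac{q-1}{2}$-th... more precisely, $\eta$ is the unique multiplicative character of order $2$, and $\eta(a) = a^{(q-1)/2}$ for $a \in \mathbb{F}_q^*$ (Euler's criterion). Similarly $\overline{\eta}(a) = a^{(p-1)/2}$ for $a \in \mathbb{F}_p^*$.

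First I would treat the even case, which is the simpler one. If $m$ is even, then $p^m - 1 = (p-1)(p^{m-1} + \cdots + 1)$, and since $m$ is even the second factor is congruent to $m \pmod{p-1}$... rather, the cleaner route: for $a \in \mathbb{F}_p^*$ we have $a^{p-1} = 1$, so $a^{(q-1)/2} = a^{(p^m-1)/2}$. Now $\tfrac{p^m-1}{2} = \tfrac{p-1}{2}\cdot(p^{m-1} + p^{m-2} + \cdots + 1)$, and the sum $p^{m-1} + \cdots + 1$ has $m$ terms each $\equiv 1 \pmod{p-1}$... hmm, that only shows it's $\equiv m$. The genuinely clean argument: $\mathbb{F}_p^* \subseteq \mathbb{F}_q^*$ is a subgroup of order $p-1$; when $m$ is even, $p+1 \mid p^m - 1$ (since $p^2 - 1 \mid p^m-1$), and one checks $\tfrac{p^m-1}{p-1}$ is even, hence $\tfrac{p-1}{2}$ divides... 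Actually the slick statement is: $\mathbb{F}_p^* = \{x \in \mathbb{F}_q^* : x^{p-1} = 1\}$, and when $m$ is even every element of $\mathbb{F}_p^*$ is a square in $\mathbb{F}_q$ because $\mathbb{F}_p^* = (\mathbb{F}_{p^2}^*)^{?}$; concretely, $a = b^{p+1}$ for some $b \in \mathbb{F}_{p^2}^* \subseteq \mathbb{F}_q^*$ (the norm map $\mathbb{F}_{p^2} \to \mathbb{F}_p$ is surjective), and since $m$ is even $p+1 \mid q - 1$... wait, $p+1 \mid p^2 - 1 \mid p^m - 1$. If $p+1$ is even this gives $a$ is a square. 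Since $p$ is odd, $p+1$ is even, so $a = b^{p+1} = (b^{(p+1)/2})^2$ is a square in $\mathbb{F}_q$, giving $\eta(a) = 1$. That settles the even case.

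For the odd case I would argue that the restriction $\eta|_{\mathbb{F}_p^*}$ is a multiplicative character of $\mathbb{F}_p^*$ of order dividing $2$, hence is either trivial or equals $\overline{\eta}$; I must rule out the trivial case. Equivalently, I need a single $a \in \mathbb{F}_p^*$ that is a nonsquare in $\mathbb{F}_p$ but a square in $\mathbb{F}_q$ would be a contradiction — so I want to show a nonsquare in $\mathbb{F}_p$ stays a nonsquare in $\mathbb{F}_q$ when $m$ is odd. Using Euler's criterion: for $a \in \mathbb{F}_p^*$, $\eta(a) = a^{(q-1)/2}$ where the exponent is computed as $\tfrac{q-1}{2} = \tfrac{p-1}{2} \cdot N$ with $N = 1 + p + \cdots + p^{m-1}$. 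Since $a^{p-1} = 1$, we get $a^{(q-1)/2} = \left(a^{(p-1)/2}\right)^{N} = \overline{\eta}(a)^N$. Because $m$ is odd and each $p^i \equiv 1 \pmod 2$, $N \equiv m \equiv 1 \pmod 2$, so $\overline{\eta}(a)^N = \overline{\eta}(a)$. Hence $\eta(a) = \overline{\eta}(a)$ for all $a \in \mathbb{F}_p^*$, and the zero case is the stated convention. The only mild subtlety — the ``obstacle,'' though it is minor — is being careful that the identity $\eta(a) = a^{(q-1)/2}$ holds as an equality of elements of $\{0,\pm1\} \subseteq \CC$ via the canonical inclusion, and that the parity computation of $N$ is used correctly; once those are pinned down both parts follow in a few lines, which is why the paper calls this ``easy to obtain.''
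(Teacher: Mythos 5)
Your argument is correct. The paper offers no proof of this lemma (it is dismissed as ``easy to obtain''), so there is nothing to compare against, but the computation you settle on is the standard one: for $a\in\mathbb{F}_p^*$, Euler's criterion gives $\eta(a)=a^{(q-1)/2}=\bigl(a^{(p-1)/2}\bigr)^N=\overline{\eta}(a)^N$ with $N=1+p+\cdots+p^{m-1}\equiv m\pmod 2$, and since $\overline{\eta}(a)=\pm1$ this yields $\eta(a)=\overline{\eta}(a)$ for $m$ odd and $\eta(a)=1$ for $m$ even in one stroke. Note that this single computation already disposes of the even case, so the separate detour through the norm map $\mathbb{F}_{p^2}\to\mathbb{F}_p$ and the divisibility $p+1\mid q-1$, while also valid, is redundant; the several abandoned false starts in that paragraph should be cut. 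Your closing remark about identifying $\{0,\pm1\}\subseteq\mathbb{F}_q$ with $\{0,\pm1\}\subseteq\mathbb{C}$ is the right point to be careful about, and it is unproblematic precisely because $p$ is odd.
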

\begin{lemma}\label{lem:diyileichangdu}
For each  $\alpha\in \mathbb{F}_p$, we let $$N_\alpha=\#\{x\in \mathbb{F}_{p^m}| {\rm Tr}(x)=\alpha\}.$$  Then
$N_\alpha =   p^{m-1}.$
\end{lemma}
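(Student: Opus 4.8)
The plan is to prove Lemma~\ref{lem:diyileichangdu}, i.e.\ that for every $\alpha\in\mathbb{F}_p$ the number of $x\in\mathbb{F}_{p^m}$ with ${\rm Tr}(x)=\alpha$ equals $p^{m-1}$. I would proceed via the standard orthogonality/character-sum argument. First, write the indicator of the event ${\rm Tr}(x)=\alpha$ using the canonical additive character $\overline{\chi}_1$ of $\mathbb{F}_p$: for $t\in\mathbb{F}_p$ one has $\frac{1}{p}\sum_{y\in\mathbb{F}_p}\overline{\chi}_1(y(t-\alpha))=1$ if $t=\alpha$ and $0$ otherwise. Substituting $t={\rm Tr}(x)$ and summing over $x\in\mathbb{F}_{p^m}$ gives
\[
N_\alpha=\sum_{x\in\mathbb{F}_{p^m}}\frac{1}{p}\sum_{y\in\mathbb{F}_p}\overline{\chi}_1\big(y({\rm Tr}(x)-\alpha)\big)
=\frac{1}{p}\sum_{y\in\mathbb{F}_p}\overline{\chi}_1(-y\alpha)\sum_{x\in\mathbb{F}_{p^m}}\overline{\chi}_1\big(y\,{\rm Tr}(x)\big).
\]

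The next step is to evaluate the inner sum $\sum_{x\in\mathbb{F}_{p^m}}\overline{\chi}_1(y\,{\rm Tr}(x))$. Note that $\overline{\chi}_1(y\,{\rm Tr}(x))=\overline{\chi}_1({\rm Tr}(yx))=\chi_1(yx)$, so the inner sum is $\sum_{x\in\mathbb{F}_{p^m}}\chi_1(yx)$. For $y=0$ this is trivially $p^m$. For $y\in\mathbb{F}_p^*$, the map $x\mapsto yx$ is a bijection of $\mathbb{F}_{p^m}$, so $\sum_{x}\chi_1(yx)=\sum_{x}\chi_1(x)=0$ since $\chi_1$ is a nontrivial additive character of $\mathbb{F}_{p^m}$ (equivalently, because ${\rm Tr}$ is surjective onto $\mathbb{F}_p$, each fiber being nonempty). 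Hence only the $y=0$ term survives, yielding $N_\alpha=\frac{1}{p}\cdot\overline{\chi}_1(0)\cdot p^m=p^{m-1}$, independently of $\alpha$.

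There is no real obstacle here; the only point requiring a moment's care is confirming that $x\mapsto\chi_1(x)$ is a \emph{nontrivial} character of $\mathbb{F}_{p^m}$ (so that its full sum vanishes), which holds because ${\rm Tr}:\mathbb{F}_{p^m}\to\mathbb{F}_p$ is a nonzero $\mathbb{F}_p$-linear functional and $\zeta_p\neq 1$. An alternative, even shorter, argument avoids characters entirely: since ${\rm Tr}$ is a surjective $\mathbb{F}_p$-linear map from the $m$-dimensional space $\mathbb{F}_{p^m}$ onto $\mathbb{F}_p$, its kernel has dimension $m-1$, hence size $p^{m-1}$, and every fiber ${\rm Tr}^{-1}(\alpha)$ is a coset of this kernel and therefore also has size $p^{m-1}$. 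I would likely present this linear-algebra version as the proof and mention the character-sum computation as the technique that generalizes to the weight computations needed later in the paper. \EOP
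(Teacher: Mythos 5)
Your proof is correct. The paper itself gives no argument for this lemma (it is dismissed with ``the conclusions of the following two lemmas are easy to obtain''), so there is nothing to compare against; either of your two arguments closes the gap. The character-sum computation is carried out correctly --- the only term surviving orthogonality is $y=0$, giving $N_\alpha = p^{m-1}$ independent of $\alpha$ --- and your observation that $\chi_1$ is nontrivial because ${\rm Tr}$ is a nonzero linear functional is exactly the point that needs checking. The linear-algebra version (the fibers of the surjective $\mathbb{F}_p$-linear map ${\rm Tr}:\mathbb{F}_{p^m}\to\mathbb{F}_p$ are cosets of its $(m-1)$-dimensional kernel) is the most economical proof and is the natural one to present; your remark that the character-sum method is the one that scales to the weight computations in Theorem~\ref{th:diyileizhongliang} is apt, since that is precisely the technique the paper uses there.
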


We will need the following lemma.
\begin{lemma}\cite{DingDing2015}\label{lem:changdu}
With the notations given as above. For each $\alpha\in \mathbb{F}_p$, let $$N_\alpha=\#\{x\in \mathbb{F}_{p^m}| {\rm Tr}(x^2)=\alpha\}.$$ Then
$$
N_\alpha=\left\{
  \begin{array}{ll}
    p^{m-1}, & \hbox{if $m$ is odd and $\alpha=0$;} \\
    p^{m-1}-(-1)^{(\frac{p-1}{2})^2\frac{m}{2}}(p-1)p^{\frac{m-2}{2}}, & \hbox{if $m$ is even and $\alpha=0$;} \\
     p^{m-1}+\overline{\eta}(-\alpha)(-1)^{(\frac{p-1}{2})^2(\frac{m+1}{2})}p^{\frac{m-1}{2}}, & \hbox{if $m$ is odd and $\alpha\neq0$;} \\
    p^{m-1}+(-1)^{(\frac{p-1}{2})^2\frac{m}{2}}p^{\frac{m-2}{2}}, & \hbox{if $m$ is even and $\alpha\neq0$.}
  \end{array}
\right.
$$
\end{lemma}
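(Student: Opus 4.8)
The plan is to detect the condition $\mathrm{Tr}(x^2)=\alpha$ by the orthogonality of the additive characters of $\mathbb{F}_p$. Writing
$$N_\alpha=\frac{1}{p}\sum_{x\in\mathbb{F}_{p^m}}\sum_{y\in\mathbb{F}_p}\zeta_p^{y(\mathrm{Tr}(x^2)-\alpha)}=p^{m-1}+\frac{1}{p}\sum_{y\in\mathbb{F}_p^*}\zeta_p^{-y\alpha}\sum_{x\in\mathbb{F}_{p^m}}\chi_1(yx^2),$$
where the $y=0$ term contributes $p^m$, the first task is to evaluate the inner sum $\sum_{x\in\mathbb{F}_{p^m}}\chi_1(yx^2)$. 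Applying Lemma~\ref{lem:ercihanshuqiuhe} with $f(x)=yx^2$ (so that $a_2=y\neq0$ and $a_1=a_0=0$) gives $\sum_{x\in\mathbb{F}_{p^m}}\chi_1(yx^2)=\eta(y)G(\eta,\chi_1)$, hence
$$N_\alpha=p^{m-1}+\frac{G(\eta,\chi_1)}{p}\sum_{y\in\mathbb{F}_p^*}\eta(y)\zeta_p^{-y\alpha}.$$

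Next I would split on the parity of $m$ and use Lemma~\ref{lem:ercitezheng} to replace $\eta(y)$ for $y\in\mathbb{F}_p^*$. If $m$ is even, then $\eta(y)=1$, so the residual sum is $\sum_{y\in\mathbb{F}_p^*}\zeta_p^{-y\alpha}$, which equals $p-1$ for $\alpha=0$ and $-1$ for $\alpha\neq0$. If $m$ is odd, then $\eta(y)=\overline{\eta}(y)$, so the residual sum is $\sum_{y\in\mathbb{F}_p^*}\overline{\eta}(y)\zeta_p^{-y\alpha}$; for $\alpha=0$ this is a complete quadratic character sum and vanishes, while for $\alpha\neq0$ the substitution $z=-y\alpha$ (using $\overline{\eta}(\alpha^{-1})=\overline{\eta}(\alpha)$) turns it into $\overline{\eta}(-\alpha)\,G(\overline{\eta},\overline{\chi}_1)$.

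Finally I would substitute the closed forms $G(\eta,\chi_1)=(-1)^{m-1}\sqrt{-1}^{(\frac{p-1}{2})^2m}\sqrt{p^m}$ and $G(\overline{\eta},\overline{\chi}_1)=\sqrt{-1}^{(\frac{p-1}{2})^2}\sqrt{p}$ from Lemma~2.1 and divide by $p$. For $m$ even, $(-1)^{m-1}=-1$ and $\sqrt{-1}^{(\frac{p-1}{2})^2m}=(-1)^{(\frac{p-1}{2})^2\frac{m}{2}}$, which produces the two even-$m$ values. For $m$ odd, $(-1)^{m-1}=1$ and, since $m+1$ is even, $G(\eta,\chi_1)G(\overline{\eta},\overline{\chi}_1)=\sqrt{-1}^{(\frac{p-1}{2})^2(m+1)}p^{\frac{m+1}{2}}=(-1)^{(\frac{p-1}{2})^2\frac{m+1}{2}}p^{\frac{m+1}{2}}$, which yields the odd-$m$ values (including $N_0=p^{m-1}$ from the vanishing sum). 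The only delicate point is the bookkeeping of the powers of $\sqrt{-1}$: one must carry $(-1)^{m-1}$ together with $\sqrt{-1}^{(\frac{p-1}{2})^2m}$ and exploit the parity of $m$ (so that $m/2$, resp. $(m+1)/2$, is an integer) to collapse them into the sign $(-1)^{(\frac{p-1}{2})^2(\cdot)}$ appearing in the statement; the remaining steps are routine character-sum manipulations.
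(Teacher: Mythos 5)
Your argument is correct: the orthogonality expansion, the reduction of $\sum_{x}\chi_1(yx^2)$ to $\eta(y)G(\eta,\chi_1)$ via Lemma~\ref{lem:ercihanshuqiuhe}, the parity split through Lemma~\ref{lem:ercitezheng}, and the final bookkeeping of $(-1)^{m-1}\sqrt{-1}^{(\frac{p-1}{2})^2m}$ all check out against each of the four cases. Note that the paper itself gives no proof of this lemma --- it is quoted from the Ding--Ding reference --- and your derivation is exactly the standard character-sum computation used there, so there is nothing to flag.
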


%

At the end of this section, we give the LFVC bound of constant composition code.
\begin{prop}\cite{LFVCbound2003}
Assume $nd-n^2+\omega_0^2+\omega_1^2+\cdots+\omega_{p-1}^2>0$. Then, an $[n,M,d,(\omega_\beta)_{\beta\in\mathbb{F}_p}]$ CCC satisfies the following inequality
\[M\leq nd/\left(nd-n^2+\omega_0^2+\omega_1^2+\cdots+\omega_{p-1}^2\right).\]
If $$M=nd/\left(nd-n^2+\omega_0^2+\omega_1^2+\cdots+\omega_{p-1}^2\right),$$ then we call CCC is optimal.
\end{prop}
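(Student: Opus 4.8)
The plan is to establish this Plotkin-type inequality by double counting the quantity
\[
S=\sum_{\substack{1\le j,k\le M\\ j\ne k}} d_H(c_j,c_k),
\]
where $C=\{c_1,\dots,c_M\}\subseteq\mathbb{F}_p^n$ is the CCC and $d_H$ denotes Hamming distance. On the one hand, since any two distinct codewords lie at distance at least $d$, we obtain at once the lower bound $S\ge M(M-1)d$.

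On the other hand, I would compute $S$ coordinate by coordinate. Fix a position $i\in\{1,\dots,n\}$ and, for $\beta\in\mathbb{F}_p$, let $n_{i,\beta}$ be the number of codewords whose $i$-th coordinate equals $\beta$. The number of \emph{ordered} pairs of codewords agreeing in coordinate $i$ is $\sum_{\beta\in\mathbb{F}_p}n_{i,\beta}^2$, so the number disagreeing there is $M^2-\sum_{\beta}n_{i,\beta}^2$; summing over $i$ gives $S=nM^2-\sum_{i=1}^n\sum_{\beta}n_{i,\beta}^2$. The constant composition hypothesis enters through the identity $\sum_{i=1}^n n_{i,\beta}=M\omega_\beta$ for every $\beta$ (count the occurrences of the symbol $\beta$ over all codewords in two ways). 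Applying the Cauchy--Schwarz inequality to each $\beta$ yields $\sum_{i=1}^n n_{i,\beta}^2\ge\frac1n\bigl(\sum_{i=1}^n n_{i,\beta}\bigr)^2=\frac{M^2\omega_\beta^2}{n}$, whence
\[
S\le nM^2-\frac{M^2}{n}\sum_{\beta\in\mathbb{F}_p}\omega_\beta^2.
\]

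Combining the two estimates gives $M(M-1)d\le nM^2-\frac{M^2}{n}\sum_{\beta}\omega_\beta^2$; dividing by $M$, multiplying through by $n$ and rearranging produces $M\bigl(nd-n^2+\sum_{\beta}\omega_\beta^2\bigr)\le nd$, and dividing by the quantity $nd-n^2+\sum_{\beta}\omega_\beta^2$, which is positive by hypothesis, yields the asserted bound. For the equality statement, observe that equality throughout forces equality in Cauchy--Schwarz for each $\beta$, i.e.\ $n_{i,\beta}$ independent of $i$, together with $d_H(c_j,c_k)=d$ for all $j\ne k$. The whole argument is elementary; the only spot that deserves a moment's care, and the only place where the constant composition structure is actually used, is the counting identity $\sum_{i}n_{i,\beta}=M\omega_\beta$, so I do not anticipate any genuine obstacle.
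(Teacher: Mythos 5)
Your double-counting argument is correct: the lower bound $S\ge M(M-1)d$, the coordinatewise count $S=nM^2-\sum_{i,\beta}n_{i,\beta}^2$, the identity $\sum_i n_{i,\beta}=M\omega_\beta$ from the constant composition property, and the Cauchy--Schwarz step all check out, and the algebra rearranges to the stated bound using the positivity hypothesis. The paper itself gives no proof of this proposition (it is quoted from the cited reference of Luo, Fu, Vinck and Chen), and your Plotkin-type argument is precisely the standard proof of that bound, so there is nothing to flag.
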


\section{The first construction}
In this section, we will define  a CCC as a  subcode of   linear code $\mathcal{C}_{D(\alpha)}$ defined by (\ref{eq:1}). Throughout this section, we always assume that $m$ is a positive integer. The defining set $D(\alpha)$ is given by
\[D(\alpha)=\{d\in \mathbb{F}_{p^m}^*| {\rm Tr}(d)=\alpha\},\]
where $\alpha\in \mathbb{F}_p$. Let $n_\alpha$ be the length of linear code $\mathcal{C}_{D(\alpha)}$, where
\begin{equation}\label{eq:1}
   \mathcal{C}_{D(\alpha)}=\{c(a)=({\rm Tr}(ad_1),{\rm Tr}(ad_2),\cdots,{\rm Tr}(ad_{n_\alpha})|a\in \mathbb{F}_{p^m} \}.
\end{equation}
Then by Lemma~\ref{lem:diyileichangdu}, we have the following result.
\begin{lemma}\label{lem:changdu3.1}
With the notations given as above, we have
$$n_\alpha=\left\{
  \begin{array}{ll}
    p^{m-1}-1, & \hbox{$\alpha=0$;} \\
    p^{m-1}, & \hbox{otherwise.}
  \end{array}
\right.$$
\end{lemma}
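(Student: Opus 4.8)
The plan is to deduce this directly from Lemma~\ref{lem:diyileichangdu}. By definition, $n_\alpha = |D(\alpha)| = \#\{d\in\mathbb{F}_{p^m}^*\mid {\rm Tr}(d)=\alpha\}$, so the task is merely to compare this count with $N_\alpha = \#\{x\in\mathbb{F}_{p^m}\mid {\rm Tr}(x)=\alpha\}$, which Lemma~\ref{lem:diyileichangdu} already evaluates as $p^{m-1}$ for every $\alpha\in\mathbb{F}_p$. The two sets differ only in whether $0$ is included: $D(\alpha)$ is $\{x\in\mathbb{F}_{p^m}\mid {\rm Tr}(x)=\alpha\}$ with the element $0$ deleted whenever $0$ satisfies the defining condition.

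First I would observe that since ${\rm Tr}$ is $\mathbb{F}_p$-linear, ${\rm Tr}(0)=0$. Hence if $\alpha=0$, the zero vector lies in the solution set counted by $N_0$ but is excluded from $D(0)\subseteq\mathbb{F}_{p^m}^*$, giving $n_0 = N_0 - 1 = p^{m-1}-1$. If $\alpha\neq 0$, then $0$ is not a solution of ${\rm Tr}(x)=\alpha$, so the restriction to $\mathbb{F}_{p^m}^*$ removes nothing and $n_\alpha = N_\alpha = p^{m-1}$. Combining the two cases yields the stated formula.

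There is essentially no obstacle here; the only point requiring a moment's care is the bookkeeping of whether $0$ belongs to the relevant solution set, which is settled immediately by the linearity of the trace. The content of the lemma is entirely carried by Lemma~\ref{lem:diyileichangdu}, and the present statement is just the correction needed to pass from a subset of $\mathbb{F}_{p^m}$ to a subset of $\mathbb{F}_{p^m}^*$.
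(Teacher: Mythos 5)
Your proposal is correct and follows exactly the route the paper intends: the paper states the lemma as an immediate consequence of Lemma~\ref{lem:diyileichangdu}, and your bookkeeping of whether $0$ (with ${\rm Tr}(0)=0$) must be removed from the count supplies precisely the missing one-line justification. No discrepancies.
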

%

\begin{theorem}\label{th:diyileizhongliang}
Let  the notations be given as above.
\begin{itemize}
  \item When $\alpha=0$,  $\mathcal{C}_{D(0)}$ defined by (\ref{eq:1}) is a $[p^{m-1}-1,m-1]$ code over $\mathbb{F}_p $ with the weight distribution as follows.
\begin{center}
\begin{tabular}{|c|c|}
  \hline
  weight & frequency  \\  \hline
  $0$ & $1$   \\  \hline
  $p^{m-2}(p-1)$ & $p^{m-1}-1$   \\  \hline
 \end{tabular}
\end{center}
  \item When $\alpha\in \mathbb{F}_p^*$,  $\mathcal{C}_{D(\alpha)}$ defined by (\ref{eq:1}) is a $[p^{m-1},m]$ code over $\mathbb{F}_p $ with the weight distribution as follows.
\begin{center}
\begin{tabular}{|c|c|}
  \hline
  weight & frequency  \\  \hline
$0$ & $1$  \\  \hline
  $p^{m-1}$ & $p-1$   \\  \hline
  $p^{m-2}(p-1)$ & $p^m-p$   \\  \hline
 \end{tabular}
\end{center}
\end{itemize}

\end{theorem}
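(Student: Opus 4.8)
The plan is to compute, for each $a \in \mathbb{F}_{p^m}$, the Hamming weight of the codeword $c(a) = ({\rm Tr}(ad))_{d \in D(\alpha)}$ by counting the number of positions where the coordinate vanishes, i.e.\ the size of the set $Z(a) = \{d \in \mathbb{F}_{p^m}^* : {\rm Tr}(d) = \alpha,\ {\rm Tr}(ad) = 0\}$. The weight is then $n_\alpha - |Z(a)|$ for $a \neq 0$, and $0$ for $a = 0$. I would express $|Z(a)|$ by the standard additive-character trick: for the two linear conditions ${\rm Tr}(d) = \alpha$ and ${\rm Tr}(ad) = 0$, write
\[
|Z(a)| = \frac{1}{p^2}\sum_{d \in \mathbb{F}_{p^m}}\left(\sum_{y \in \mathbb{F}_p}\overline{\chi}_1\big(y({\rm Tr}(d)-\alpha)\big)\right)\left(\sum_{z \in \mathbb{F}_p}\overline{\chi}_1\big(z\,{\rm Tr}(ad)\big)\right) - \varepsilon,
\]
where the correction $\varepsilon$ removes the contribution of $d = 0$ (which is counted iff $\alpha = 0$, giving $\varepsilon = \delta_{\alpha,0}$). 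Expanding and swapping the order of summation, the inner sum over $d \in \mathbb{F}_{p^m}$ of $\chi_1\big((yd + zad)\big)$ — using ${\rm Tr}(yd) = {\rm Tr}(ydx)$ type manipulations, more precisely $\overline{\chi}_1(y\,{\rm Tr}(d)) = \chi_1(yd)$ since $y \in \mathbb{F}_p$ — is $p^m$ when $y + za = 0$ in $\mathbb{F}_{p^m}$ and $0$ otherwise. Since $y,z \in \mathbb{F}_p$ and $a \in \mathbb{F}_{p^m}$, the equation $y = -za$ forces $z = 0$ (hence $y = 0$) unless $a \in \mathbb{F}_p$; when $a \in \mathbb{F}_p^*$ it is solved by all pairs $(y,z) = (-za, z)$, $z \in \mathbb{F}_p$.

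Carrying this out gives the clean dichotomy: if $a \notin \mathbb{F}_p$, then $|Z(a)| = p^{m-2}$ (minus the correction), so the weight is $p^{m-1} - p^{m-2} = p^{m-2}(p-1)$ for both cases of $\alpha$; if $a \in \mathbb{F}_p^*$, the extra terms contribute a factor involving $\overline{\chi}_1(-z\alpha)$ summed over $z \in \mathbb{F}_p$, which evaluates to $p^{m-1}$ when $\alpha = 0$ and to $0$ when $\alpha \neq 0$ (by orthogonality, since $-z\alpha$ runs over all of $\mathbb{F}_p$). Thus for $\alpha \neq 0$: when $a \in \mathbb{F}_p^*$ the codeword $c(a) = (\alpha\,a, \alpha\,a, \dots)$ is constant and nonzero, so $|Z(a)| = 0$ and the weight is $n_\alpha = p^{m-1}$, occurring $p-1$ times; when $a = 0$ the weight is $0$; and for the remaining $p^m - p$ values $a \in \mathbb{F}_{p^m}\setminus\mathbb{F}_p$ the weight is $p^{m-2}(p-1)$. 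For $\alpha = 0$: when $a \in \mathbb{F}_p$ one checks $c(a) = 0$ (since ${\rm Tr}(ad) = a\,{\rm Tr}(d) = 0$ on $D(0)$), so these $p$ values all give the zero codeword — this collapses the dimension to $m-1$ — while the other $p^m - p$ values of $a$, partitioned into $(p^m-p)/p = p^{m-1}-1$ classes modulo $\mathbb{F}_p$ that each give the same codeword up to... actually each distinct nonzero codeword arises from exactly one coset of $\mathbb{F}_p$, yielding $p^{m-1}-1$ codewords of weight $p^{m-2}(p-1)$.

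For the dimension claims: in the case $\alpha \neq 0$ the map $a \mapsto c(a)$ is injective (if ${\rm Tr}(ad) = 0$ for all $d$ with ${\rm Tr}(d) = \alpha$, then since such $d$ span $\mathbb{F}_{p^m}$ over $\mathbb{F}_p$ — a coset of the hyperplane $\ker {\rm Tr}$ together with the origin's translate spans everything as $\alpha \neq 0$ — we get ${\rm Tr}(ax) = 0$ for all $x$, hence $a = 0$), so $|\mathcal{C}_{D(\alpha)}| = p^m$ and the dimension is $m$; the weight frequencies $1 + (p-1) + (p^m - p) = p^m$ check out. In the case $\alpha = 0$ the kernel of $a \mapsto c(a)$ is exactly $\mathbb{F}_p$ (the argument above now only recovers $a \in \mathbb{F}_p$, and conversely these all map to $0$), so $|\mathcal{C}_{D(0)}| = p^{m-1}$, dimension $m-1$, with $1 + (p^{m-1}-1) = p^{m-1}$ codewords accounted for. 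The main technical obstacle is bookkeeping the correction term for $d = 0$ together with the degenerate case $a \in \mathbb{F}_p$, and — for $\alpha = 0$ — correctly arguing that the $p$-to-$1$ collapse is exactly along cosets of $\mathbb{F}_p$ so that the nonzero weight occurs with multiplicity $p^{m-1}-1$ rather than $p^m - p$; everything else is a routine Gauss-sum-free character computation, since the two defining conditions here are $\mathbb{F}_p$-linear and no quadratic term appears (Lemma~\ref{lem:ercihanshuqiuhe} is needed only for later results, not this one).
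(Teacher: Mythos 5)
Your proposal is correct and follows essentially the same route as the paper: both compute $wt(c(a))$ by counting the zero coordinates of $c(a)$ with a double additive-character sum over $\mathbb{F}_p\times\mathbb{F}_p$, reduce to the condition $au+v=0$ (your $y+za=0$), and split on whether $a\in\mathbb{F}_p$, with the dimension read off from the kernel of $a\mapsto c(a)$. Your explicit coset-of-$\mathbb{F}_p$ argument for the multiplicity $p^{m-1}-1$ in the $\alpha=0$ case is only a slightly more detailed version of what the paper leaves implicit after identifying the kernel.
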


\begin{proof}
For $a\neq0$, the Hamming weight of codeword $c(a)$ is equal to
\begin{eqnarray}\label{eq:zhongliang}
\nonumber wt(c(a))  &=& n_\alpha-\frac{1}{p}\sum_{x\in {D(\alpha)}}\sum_{u\in \mathbb{F}_p}\zeta_p^{u{\rm Tr}(ax)}  \\
\nonumber   &=&  n_\alpha-\frac{1}{p^2}\sum_{x\in \mathbb{F}_{p^m}^* }\sum_{u\in \mathbb{F}_p}\zeta_p^{u{\rm Tr}(ax)}\sum_{v\in \mathbb{F}_p}\zeta_p^{v({\rm Tr}(x)-\alpha)} \\
\nonumber &=&   n_\alpha-\frac{1}{p^2}\sum_{x\in \mathbb{F}_{p^m}^* }(1+\sum_{u\in \mathbb{F}_p^*}\zeta_p^{u{\rm Tr}(ax)})(1+\sum_{v\in \mathbb{F}_p^*}\zeta_p^{v({\rm Tr}(x)-\alpha)}) \\
\nonumber   &=&  n_\alpha-\frac{1}{p^2}(p^m-1)-\frac{1}{p^2}\sum_{x\in \mathbb{F}_{p^m}^* }\sum_{v\in \mathbb{F}_p^*}\zeta_p^{v({\rm Tr}(x)-\alpha)}-\frac{1}{p^2}\sum_{x\in \mathbb{F}_{p^m}^* }\sum_{u\in \mathbb{F}_p^*}\zeta_p^{u{\rm Tr}(ax)} \\
\nonumber& &-\frac{1}{p^2}\sum_{x\in \mathbb{F}_{p^m}^* }\sum_{u\in \mathbb{F}_p^*}\zeta_p^{u{\rm Tr}(ax)}\sum_{v\in \mathbb{F}_p^*}\zeta_p^{v({\rm Tr}(x)-\alpha)}\\
\nonumber   &=&   n_\alpha-\frac{1}{p^2}(p^m-1)+\frac{1}{p^2}\sum_{v\in \mathbb{F}_p^*}\zeta_p^{-v\alpha}+\frac{1}{p^2}(p-1)\\
&&-\frac{1}{p^2}\sum_{u\in \mathbb{F}_p^*}\sum_{v\in \mathbb{F}_p^*}\zeta_p^{-v\alpha}(\sum_{x\in \mathbb{F}_{p^m} }\zeta_p^{{\rm Tr}((au+v)x)}-1).
\end{eqnarray}

1) If $\alpha=0$, from (\ref{eq:zhongliang}) and Lemma~\ref{lem:changdu3.1}, we have
\begin{eqnarray*}
  wt(c(a))    &=&   n_0-\frac{1}{p^2}(p^m-1)+\frac{2}{p^2}(p-1)-\frac{1}{p^2}\sum_{u\in \mathbb{F}_p^*}\sum_{v\in \mathbb{F}_p^*} (\sum_{x\in \mathbb{F}_{p^m} }\zeta_p^{{\rm Tr}((au+v)x)}-1)\\
 &=&  n_0-p^{m-2}+1-\frac{1}{p^2}\sum_{u\in \mathbb{F}_p^*}\sum_{v\in \mathbb{F}_p^*} \sum_{x\in \mathbb{F}_{p^m} }\zeta_p^{{\rm Tr}((au+v)x)}\\
&=&  n_0-p^{m-2}+1-p^{m-2}\#\{u,v\in \mathbb{F}_p^*|au+v=0 \}\\
&=& \left\{
      \begin{array}{ll}
       0 , & \hbox{if $a\in \mathbb{F}_p^*$;} \\
        p^{m-2}(p-1), & \hbox{otherwise.}
      \end{array}
    \right.
\end{eqnarray*}
Note that when $a\in \mathbb{F}_p$, we have $wt(c(a))=0$. This implies that the dimension of linear code $\mathcal{C}_{D(0)}$ is $m-1$.

2) If $\alpha\neq0$, then
\[\sum_{v\in \mathbb{F}_p^*}\zeta_p^{-v\alpha}=-1 ~~{\rm and}~~ \sum_{u\in \mathbb{F}_p^*}\sum_{v\in \mathbb{F}_p^*}\zeta_p^{-v\alpha}=-(p-1).\]
From (\ref{eq:zhongliang}) and Lemma~\ref{lem:changdu3.1}, we get
\begin{eqnarray*}
  wt(c(a))
 &=& n_\alpha-p^{m-2}-\frac{1}{p^2}\sum_{u\in \mathbb{F}_p^*}\sum_{v\in \mathbb{F}_p^*} \zeta_p^{-v\alpha}\sum_{x\in \mathbb{F}_{p^m} }\zeta_p^{{\rm Tr}((au+v)x)}\\
&=& \left\{
      \begin{array}{ll}
       n_\alpha-p^{m-2}-p^{m-2}\sum\limits_{\{u,v\in \mathbb{F}_p^*|au+v=0\}}\zeta_p^{-v\alpha} , & \hbox{if $a\in \mathbb{F}_p^*$;} \\
        p^{m-2}(p-1), & \hbox{otherwise.}
      \end{array}
    \right.\\
&=& \left\{
      \begin{array}{ll}
       n_\alpha-p^{m-2}-p^{m-2}\sum\limits_{v\in \mathbb{F}_p^*}\zeta_p^{-v\alpha} , & \hbox{if $a\in \mathbb{F}_p^*$;} \\
        p^{m-2}(p-1), & \hbox{otherwise.}
      \end{array}
    \right.\\
&=& \left\{
      \begin{array}{ll}
       p^{m-1} , & \hbox{if $a\in \mathbb{F}_p^*$;} \\
        p^{m-2}(p-1), & \hbox{otherwise.}
      \end{array}
    \right.
\end{eqnarray*}
Note that when $a\in \mathbb{F}_{p^m}^*$, we have $wt(c(a))>0$. This implies that the dimension of linear code $\mathcal{C}_{D(\alpha)}$ is $m$.

\end{proof}

The Code $\mathcal{C}_{\alpha}'$ is defied by
\begin{equation}\label{eq:2}
 \mathcal{C}_{\alpha}'=\{c(a)~|~a\in \mathbb{F}_{p^m}\setminus \mathbb{F}_{p}\}.
 \end{equation}
Denote the size of code $\mathcal{C}_{\alpha}'$ by $M_\alpha.$
It is easy to check
\begin{eqnarray*}
  M_\alpha &=& \left\{
                 \begin{array}{ll}
                   p^{m-1}-1, & \hbox{if $\alpha=0$;} \\
                   p^{m}-p, & \hbox{otherwise.}
                 \end{array}
               \right.
\end{eqnarray*}
%
\begin{theorem}\label{th:yileiccc}
The code $\mathcal{C}_{\alpha}'$ defined by (\ref{eq:2}) is an  $[n_\alpha,M_\alpha,d,(\omega_\beta)_{\beta\in \mathbb{F}_{p}}]$  CCC, where

1)  in the case  $\alpha=0$:
\begin{eqnarray*}
  n_0 &=& p^{m-1}-1; \\
M_0 &=& p^{m-1}-1;\\
  \omega_0 &=& p^{m-2}-1; \\
  \omega_\beta &=&  p^{m-2}~~~~\mbox{ for any} ~~~\beta\in \mathbb{F}_{p}^*;\\
  d &=&   p^{m-2}(p-1);
\end{eqnarray*}

2)  in the case  $\alpha\neq0$:
\begin{eqnarray*}
  n_\alpha &=& p^{m-1}; \\
M_\alpha &=& p^m-p;\\
    \omega_\beta &=&  p^{m-2}~~~~\mbox{ for any} ~~~\beta\in \mathbb{F}_{p};\\
  d &=&   p^{m-2}(p-1).
\end{eqnarray*}
\end{theorem}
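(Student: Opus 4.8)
The plan is to verify each of the three defining properties of a constant composition code—length, size, and the composition vector $(\omega_\beta)_{\beta\in\mathbb{F}_p}$—and then compute the minimum distance $d$. The length $n_\alpha$ and the size $M_\alpha$ are already recorded (Lemma~\ref{lem:changdu3.1} and the displayed formula for $M_\alpha$ preceding the theorem), so the real content is the composition and the minimum distance. First I would fix a nonzero codeword $c(a)$ with $a\in\mathbb{F}_{p^m}\setminus\mathbb{F}_p$, and for each $\beta\in\mathbb{F}_p$ count $\omega_\beta(a):=\#\{x\in D(\alpha)\mid \operatorname{Tr}(ax)=\beta\}$. The key point is to show this count is \emph{independent} of the choice of such $a$.

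To compute $\omega_\beta(a)$ I would use the standard additive-character indicator twice, exactly as in the proof of Theorem~\ref{th:diyileizhongliang}: write
\[
\omega_\beta(a)=\frac{1}{p^2}\sum_{x\in\mathbb{F}_{p^m}^*}\sum_{u\in\mathbb{F}_p}\zeta_p^{u(\operatorname{Tr}(ax)-\beta)}\sum_{v\in\mathbb{F}_p}\zeta_p^{v(\operatorname{Tr}(x)-\alpha)},
\]
split off the $u=0$ and $v=0$ terms, and collapse the inner sum over $x$ using $\sum_{x\in\mathbb{F}_{p^m}}\zeta_p^{\operatorname{Tr}((au+v)x)}=p^m[au+v=0]$. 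Because $a\notin\mathbb{F}_p$, the equation $au+v=0$ with $u,v\in\mathbb{F}_p^*$ has \emph{no} solution, which is precisely what kills the dependence on $a$; the cross term therefore contributes nothing beyond the $-1$ corrections, and one is left with a closed expression in $p,m,\alpha,\beta$ only. Carrying out the (routine) arithmetic, for $\alpha=0$ one gets $\omega_0=p^{m-2}-1$ and $\omega_\beta=p^{m-2}$ for $\beta\neq 0$, and for $\alpha\neq 0$ one gets $\omega_\beta=p^{m-2}$ for every $\beta$; note $\sum_\beta \omega_\beta = n_\alpha$ in both cases, a useful consistency check. This simultaneously shows the code really is a CCC (the composition is constant across codewords) and identifies the $\omega_\beta$.

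For the minimum distance, I would observe that $d=n_\alpha-\max_\beta\omega_\beta$, since the number of coordinates where $c(a)$ equals a fixed symbol $\beta$ and another codeword's coordinate differs is controlled by the weight computation; more directly, the Hamming weight of the nonzero codeword $c(a)$ is $n_\alpha-\omega_0(a)$, and by linearity $d=\min_{a\neq 0}\operatorname{wt}(c(a))$ over the full code $\mathcal{C}_{D(\alpha)}$. From Theorem~\ref{th:diyileizhongliang} the minimum nonzero weight is $p^{m-2}(p-1)$ in both cases (the weight $p^{m-1}$ that occurs for $\alpha\neq 0$ is larger), and one should check that a codeword of that minimum weight actually lies in $\mathcal{C}_\alpha'$, i.e.\ corresponds to some $a\in\mathbb{F}_{p^m}\setminus\mathbb{F}_p$—which is immediate since the weight-$p^{m-2}(p-1)$ codewords are exactly those with $a\notin\mathbb{F}_p$ (for $\alpha=0$) or $a\notin\mathbb{F}_p^*$ (for $\alpha\neq 0$), and in either situation infinitely many—indeed all but finitely many—such $a$ lie outside $\mathbb{F}_p$. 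Hence $d=p^{m-2}(p-1)$.

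The main obstacle is purely bookkeeping: keeping track of the $u=0$, $v=0$, and $(u,v)\neq(0,0)$ contributions and the various $-1$ corrections when $\beta=0$ versus $\beta\neq 0$ and $\alpha=0$ versus $\alpha\neq 0$, so that all four sub-cases come out with $\sum_\beta\omega_\beta=n_\alpha$. There is no genuine difficulty beyond this; the Gauss-sum machinery of the Preliminaries is not even needed here, only the orthogonality relation for additive characters, because the quadratic terms that would invoke $G(\eta,\chi_1)$ never appear in this construction.
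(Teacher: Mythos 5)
Your proposal is correct and follows essentially the same route as the paper: the same double character-sum expansion of $\omega_\beta$ with the key observation that $au+v\neq 0$ for $u,v\in\mathbb{F}_p^*$ when $a\notin\mathbb{F}_p$, recovery of $\omega_0$ from $n_\alpha-\sum_{\beta\neq 0}\omega_\beta$, and reduction of the minimum distance to the weight distribution of the ambient linear code via differences of codewords. (The parenthetical heuristic $d=n_\alpha-\max_\beta\omega_\beta$ is not a valid general principle, but you immediately replace it with the correct difference-set argument, so nothing is lost.)
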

\begin{proof}
1) Note that $a\neq0$. If $\alpha=0$, for any $\beta\in \mathbb{F}_{p}^*$, we have
\begin{eqnarray*}
  \omega_\beta &=& \frac{1}{p}\sum_{x\in {D(\alpha)}}\sum_{u\in \mathbb{F}_{p}}\zeta_p^{u({\rm Tr}(ax)-\beta)} \\
    &=& \frac{1}{p^2}\sum_{x\in \mathbb{F}_{p^m}^*}\sum_{u\in \mathbb{F}_{p}}\zeta_p^{u({\rm Tr}(ax)-\beta)}\sum_{v\in \mathbb{F}_{p}}\zeta_p^{v{\rm Tr}(x)} \\
&=& \frac{1}{p^2}\sum_{x\in \mathbb{F}_{p^m}}\sum_{u\in \mathbb{F}_{p}}\zeta_p^{u({\rm Tr}(ax)-\beta)}\sum_{v\in \mathbb{F}_{p}}\zeta_p^{v{\rm Tr}(x)} \\
    &=& \frac{1}{p^2}\sum_{u\in \mathbb{F}_{p}}\sum_{v\in \mathbb{F}_{p}}\zeta_p^{-u\beta}\sum_{x\in \mathbb{F}_{p^m}}\zeta_p^{{\rm Tr}((au+v)x)}\\
&=&p^{m-2}+ \frac{1}{p^2}\sum_{u\in \mathbb{F}_{p}^*}\zeta_p^{-u\beta}\sum_{x\in \mathbb{F}_{p^m}}\zeta_p^{{\rm Tr}(aux)}+\frac{1}{p^2}\sum_{v\in \mathbb{F}_{p}^*}\sum_{x\in \mathbb{F}_{p^m}}\zeta_p^{{\rm Tr}(vx)}\\
&& +\frac{1}{p^2}\sum_{u\in \mathbb{F}_{p}^*}\sum_{v\in \mathbb{F}_{p}^*}\zeta_p^{-u\beta}\sum_{x\in \mathbb{F}_{p^m}}\zeta_p^{{\rm Tr}((au+v)x)}\\
&=& p^{m-2}+\frac{1}{p^2}\sum_{u\in \mathbb{F}_{p}^*}\sum_{v\in \mathbb{F}_{p}^*}\zeta_p^{-u\beta}\sum_{x\in \mathbb{F}_{p^m}}\zeta_p^{{\rm Tr}((au+v)x)}.
\end{eqnarray*}
Note that $a\in \mathbb{F}_{p^m}\setminus \mathbb{F}_{p}$, then $au+v\neq 0$ for any $u,v\in \mathbb{F}_{p}^*$. This implies that
$$\sum_{x\in \mathbb{F}_{p^m}}\zeta_p^{{\rm Tr}((au+v)x)}=0.$$
Thus, we have $\omega_\beta=p^{m-2}$, for any $\beta\neq0$. Recall $n_0=p^{m-1}-1 $, then $\omega_0= p^{m-2}-1.$


2) If $\alpha\neq0$, for any $\beta\in \mathbb{F}_{p}^*$ and $a\neq0$, then
\begin{eqnarray*}
  \omega_\beta &=& \frac{1}{p}\sum_{x\in {D(\alpha)}}\sum_{u\in \mathbb{F}_{p}}\zeta_p^{u({\rm Tr}(ax)-\beta)} \\
    &=& \frac{1}{p^2}\sum_{x\in \mathbb{F}_{p^m}^*}\sum_{u\in \mathbb{F}_{p}}\zeta_p^{u({\rm Tr}(ax)-\beta)}\sum_{v\in \mathbb{F}_{p}}\zeta_p^{v({\rm Tr}(x)-\alpha)} \\
 &=& \frac{1}{p^2}\sum_{x\in \mathbb{F}_{p^m}}\sum_{u\in \mathbb{F}_{p}}\zeta_p^{u({\rm Tr}(ax)-\beta)}\sum_{v\in \mathbb{F}_{p}}\zeta_p^{v({\rm Tr}(x)-\alpha)} \\
    &=& \frac{1}{p^2}\sum_{u\in \mathbb{F}_{p}}\sum_{v\in \mathbb{F}_{p}}\zeta_p^{-u\beta}\zeta_p^{-v\alpha}\sum_{x\in \mathbb{F}_{p^m}}\zeta_p^{{\rm Tr}((au+v)x)}\\
&=&p^{m-2}+ \frac{1}{p^2}\sum_{u\in \mathbb{F}_{p}^*}\zeta_p^{-u\beta}\sum_{x\in \mathbb{F}_{p^m}}\zeta_p^{{\rm Tr}(aux)}+\frac{1}{p^2}\sum_{v\in \mathbb{F}_{p}^*}\zeta_p^{-v\alpha}\sum_{x\in \mathbb{F}_{p^m}}\zeta_p^{{\rm Tr}(vx)}\\
&& +\frac{1}{p^2}\sum_{u\in \mathbb{F}_{p}^*}\sum_{v\in \mathbb{F}_{p}^*}\zeta_p^{-u\beta}\zeta_p^{-v\alpha}\sum_{x\in \mathbb{F}_{p^m}}\zeta_p^{{\rm Tr}((au+v)x)}\\
&=& p^{m-2}+\frac{1}{p^2}\sum_{u\in \mathbb{F}_{p}^*}\sum_{v\in \mathbb{F}_{p}^*}\zeta_p^{-u\beta}\zeta_p^{-v\alpha}\sum_{x\in \mathbb{F}_{p^m}}\zeta_p^{{\rm Tr}((au+v)x)}.
\end{eqnarray*}
Note that $a\in \mathbb{F}_{p^m}\setminus \mathbb{F}_{p}$, then $au+v\neq 0$ for any $u,v\in \mathbb{F}_{p}^*$.
This leads to
$$\sum_{x\in \mathbb{F}_{p^m}}\zeta_p^{{\rm Tr}((au+v)x)}=0, ~~u,v\in  \mathbb{F}_{p}^*.$$
Therefore, $\omega_\beta=p^{m-2}$ for any $\beta\in \mathbb{F}_{p}^*$. Recall $n_\alpha=p^{m-1}$, then $\omega_0=p^{m-1}-p^{m-2}(p-1)=p^{m-2}$.

Denote by $d_H(c(a_1),c(a_2))$ the Hamming distance of $c(a_1)$ and $(a_2)$. For any $\alpha\in \mathbb{F}_p$,  when $a_1$ and $a_2$ run through $\mathbb{F}_{p^m}\setminus \mathbb{F}_{p}$ with $a_1\neq a_2$, then $a_1- a_2$ runs through $\mathbb{F}_{p^m}^*$. Therefore, the minimal distance of $\mathcal{C}'$ is the same as that of $\mathcal{C}$, which is obtained by Theorem~\ref{th:diyileizhongliang}. This completes the proof.
\end{proof}
\begin{remark}
When $\alpha=0$, the parameters of code $\mathcal{C}_{0}'$ in above Theorem satisfy $$n_0d/(n_0d-n_0^2+\omega_0^2+\omega_1^2+\cdots+\omega_{p-1}^2)=M_0.$$
Thus, we obtain a class of optimal CCCs. However, if $\alpha\in \mathbb{F}_p^*$, the parameters of code $\mathcal{C}_{\alpha}'$    satisfy $$n_\alpha d-n_\alpha^2+\omega_0^2+\omega_1^2+\cdots+\omega_{p-1}^2=0.$$
Therefore, the LFVC bound cannot be applied to measure the optimality of these CCCs. Note that for large $m$, code $\mathcal{C}_{\alpha}'$ has the same minimal distance $d$ and length $n_\alpha$. In this sense, compared to $\mathcal{C}_0'$ which is optimal, $\mathcal{C}_{\alpha}'$ has more codewords with same minimal distance $d$  for $\alpha\in \mathbb{F}_p^*$. This implies that $\mathcal{C}_{\alpha}'$ are excellent.


\end{remark}
%

\section{The second construction}
In this section, we  let $\tau=(-1)^{(\frac{p-1}{2})^2\frac{m}{2}}$, where $m$ is even. The defining set $E$ is given by
\[{E}=\{d\in \mathbb{F}_{p^m}^*| {\rm Tr}(d^2)=0\}.\]
The linear code $\mathcal{C}_{E}$ is defined as
 \begin{equation}\label{eq:3}
 \mathcal{C}_{E}=\{c(a)=({\rm Tr}(ad_1),{\rm Tr}(ad_2),\cdots,{\rm Tr}(ad_n)|a\in  \mathbb{F}_{p^m} \}.
 \end{equation}
\begin{lemma}\cite{DingDing2015}\label{lem:dierleizongliang}
Then the code $\mathcal{C}_{E}$ defined by (\ref{eq:3}) over $\mathbb{F}_p$ has parameters $[ p^{m-1}-\tau(p-1)p^{\frac{m}{2}-1}-1,m]$ and weight distribution in the following.
\begin{center}
\begin{tabular}{|c|c|}
  \hline
 weight & frequency  \\ \hline
 $0$ & $1$  \\ \hline
  $(p-1)p^{m-2}$ & $p^{m-1}-\tau(p-1)p^{\frac{m}{2}-1}-1$ \\ \hline
  $(p-1)(p^{m-2}-\tau p^{\frac{m}{2}-1}) $& $ (p-1) \left( p^{m-1}+\tau p^{\frac{m}{2}-1}\right)$  \\  \hline
\end{tabular}
\end{center}\end{lemma}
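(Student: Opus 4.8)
The plan is to run the standard additive‑character computation for trace codes $\mathcal{C}_D$, with the quadratic defining condition ${\rm Tr}(d^2)=0$ resolved through quadratic Gauss sums. First I would record the length: since $E=\{d\in\mathbb{F}_{p^m}^*:{\rm Tr}(d^2)=0\}$, applying Lemma~\ref{lem:changdu} with $m$ even and $\alpha=0$ gives $|E|=p^{m-1}-\tau(p-1)p^{\frac{m}{2}-1}-1$, which is exactly the claimed length $n$. Then, for a fixed $a\in\mathbb{F}_{p^m}^*$, I would write $wt(c(a))=|E|-N(a)$ with $N(a)=\#\{d\in\mathbb{F}_{p^m}^*:{\rm Tr}(d^2)=0,\ {\rm Tr}(ad)=0\}$ and detect both linear constraints by orthogonality of additive characters, obtaining
\[
N(a)=\frac{1}{p^2}\Bigl(\ \sum_{y,z\in\mathbb{F}_p}\ \sum_{d\in\mathbb{F}_{p^m}}\zeta_p^{{\rm Tr}(yd^2+azd)}\ -\ p^2\Bigr).
\]
The $y=0$ slice contributes $p^m$ (the inner sum over $d$ vanishes unless $z=0$, using $a\neq0$). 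For $y\neq0$, Lemma~\ref{lem:ercihanshuqiuhe} applied to $f(d)=yd^2+azd$ over $\mathbb{F}_{p^m}$, together with Lemma~\ref{lem:ercitezheng} (so $\eta(y)=1$ since $m$ is even), turns the remaining triple sum into $G(\eta,\chi_1)\sum_{y\in\mathbb{F}_p^*}\sum_{z\in\mathbb{F}_p}\overline{\chi}_1\!\bigl(-{\rm Tr}(a^2)z^2/(4y)\bigr)$, where I have pulled ${\rm Tr}(a^2)$ out of the character using $\mathbb{F}_p$‑linearity of the trace on $z^2/(4y)\in\mathbb{F}_p$.

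Next I would split on whether ${\rm Tr}(a^2)=0$. If ${\rm Tr}(a^2)=0$, the inner $z$‑sum is simply $p$, so the triple sum equals $(p-1)p\,G(\eta,\chi_1)$; substituting $G(\eta,\chi_1)=-\tau p^{m/2}$ (which follows for even $m$ from the explicit evaluation of $G(\eta,\chi_1)$ stated above, after simplifying $(-1)^{m-1}\sqrt{-1}^{(\frac{p-1}{2})^2m}$) gives $N(a)=p^{m-2}-\tau(p-1)p^{\frac{m}{2}-1}-1$ and hence $wt(c(a))=(p-1)p^{m-2}$. If ${\rm Tr}(a^2)\neq0$, each inner $z$‑sum is a quadratic Gauss sum over $\mathbb{F}_p$, equal to $\overline{\eta}\bigl(-{\rm Tr}(a^2)\bigr)\,\overline{\eta}(y)\,G(\overline{\eta},\overline{\chi}_1)$ once the square factor $4$ and the inverse are absorbed into $\overline{\eta}$; summing over $y\in\mathbb{F}_p^*$ annihilates the whole contribution because $\sum_{y\in\mathbb{F}_p^*}\overline{\eta}(y)=0$, leaving $N(a)=p^{m-2}-1$ and $wt(c(a))=(p-1)\bigl(p^{m-2}-\tau p^{\frac{m}{2}-1}\bigr)$. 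The frequencies are then obtained from Lemma~\ref{lem:changdu} once more: the number of $a\in\mathbb{F}_{p^m}^*$ with ${\rm Tr}(a^2)=0$ is $p^{m-1}-\tau(p-1)p^{\frac{m}{2}-1}-1$, and its complement in $\mathbb{F}_{p^m}^*$ has size $(p-1)\bigl(p^{m-1}+\tau p^{\frac{m}{2}-1}\bigr)$, matching the two rows of the table. Finally, since both nonzero weights are strictly positive, $c(a)=c(a')$ forces $wt(c(a-a'))=0$ and thus $a=a'$, so $|\mathcal{C}_E|=p^m$ and the dimension is $m$.

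The step I expect to be most error‑prone is the bookkeeping around $G(\eta,\chi_1)$: correctly collapsing $(-1)^{m-1}\sqrt{-1}^{(\frac{p-1}{2})^2 m}\sqrt{p^m}$ to $-\tau p^{m/2}$ for even $m$, and handling the square factors ($4$, and $\overline{\eta}$ of an inverse) when rewriting $\overline{\eta}\bigl(-{\rm Tr}(a^2)/(4y)\bigr)$. By contrast the conceptual core is brief: everything hinges on the two facts that $\eta(y)=1$ on $\mathbb{F}_p^*$ when $m$ is even and $\sum_{y\in\mathbb{F}_p^*}\overline{\eta}(y)=0$, which is precisely why the ${\rm Tr}(a^2)\neq0$ case reduces to the bare diagonal term $p^m$ and the weight distribution comes out with only two nonzero weights.
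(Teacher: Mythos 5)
The paper gives no proof of this lemma at all --- it is quoted verbatim from \cite{DingDing2015} --- so there is no in-paper argument to compare against. Your reconstruction is correct and is exactly the standard character-sum derivation (the same machinery the paper itself deploys in Theorems~\ref{th:diyileizhongliang} and~\ref{th:erleiccc}): the length from Lemma~\ref{lem:changdu}, the weight via $wt(c(a))=|E|-N(a)$ with both constraints detected by additive characters, Lemma~\ref{lem:ercihanshuqiuhe} plus $\eta|_{\mathbb{F}_p^*}\equiv 1$ for even $m$ to reduce to $G(\eta,\chi_1)\sum_{y\in\mathbb{F}_p^*}\sum_{z\in\mathbb{F}_p}\overline{\chi}_1\bigl(-{\rm Tr}(a^2)z^2/(4y)\bigr)$, and the dichotomy on ${\rm Tr}(a^2)$ with $\sum_{y\in\mathbb{F}_p^*}\overline{\eta}(y)=0$ killing the second case. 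I verified the arithmetic: $G(\eta,\chi_1)=-\tau p^{m/2}$ for even $m$, $N(a)=p^{m-2}-\tau(p-1)p^{\frac{m}{2}-1}-1$ resp.\ $p^{m-2}-1$ in the two cases, and the frequencies match the counts from Lemma~\ref{lem:changdu}. The only caveat, which afflicts the lemma as stated rather than your proof, is the degenerate case $m=2$, $\tau=1$, where $n=0$ and the second weight vanishes; for all nondegenerate parameters your dimension argument (both nonzero weights positive, hence $a\mapsto c(a)$ injective) is sound.
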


Define set
\[S=\{a\in \mathbb{F}_{q^m}| {\rm Tr}(a^2)\neq0\}\]
and code 
\begin{equation}\label{eq:4}
\mathcal{C}_E'=\{c(a)~|~a\in S\}.
\end{equation}
By Lemma~\ref{lem:changdu}, we obtain the size of $S$ is $p^m-p^{m-1}+\tau(p-1)p^{\frac{m}{2}-1}$.
In the following, we give our main result.
\begin{theorem}\label{th:erleiccc}
The code $\mathcal{C}_E'$  defined by (\ref{eq:4})  is a CCC with parameters $[n,M,d,(\omega_\beta)_{\beta\in \mathbb{F}_p}]$ where
\begin{eqnarray*}
  n &=&  p^{m-1}-\tau(p-1)p^{\frac{m}{2}-1}-1; \\
  M &=& p^m-p^{m-1}+\tau(p-1)p^{\frac{m}{2}-1}; \\
  \omega_0 &=& p^{m-2}-1; \\
 \omega_\beta &=& p^{m-2}-\tau p^{\frac{m}{2}-1} ~~for~~ any~~\beta\in \mathbb{F}_p^*;  \\
  d &=& \left\{
          \begin{array}{ll}
            (p-1)p^{m-2}, & \hbox{$\tau=-1$;} \\
            (p-1)(p^{m-2}- p^{\frac{m}{2}-1}), & \hbox{$\tau=1$.}
          \end{array}
        \right.
\end{eqnarray*}
\end{theorem}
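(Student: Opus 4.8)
The plan is to reuse the machinery of Section 3: the code $\mathcal{C}_E'$ is obtained from $\mathcal{C}_E$ by restricting the message set from $\mathbb{F}_{p^m}$ to $S = \{a \in \mathbb{F}_{p^m} : {\rm Tr}(a^2)\neq 0\}$, so its length $n$ and minimum distance $d$ can be read off directly from Lemma~\ref{lem:dierleizongliang} once we check that the restriction neither collapses any codewords nor deletes the minimum-weight ones. The length $n = p^{m-1}-\tau(p-1)p^{\frac{m}{2}-1}-1$ is just the length of $\mathcal{C}_E$. For $M$, I would argue that the map $a \mapsto c(a)$ is injective on $S$: if $c(a_1)=c(a_2)$ then ${\rm Tr}((a_1-a_2)d)=0$ for all $d \in E$, and since $\mathcal{C}_E$ has dimension $m$ (Lemma~\ref{lem:dierleizongliang}) the map $a\mapsto c(a)$ is injective on all of $\mathbb{F}_{p^m}$; hence $M = |S| = p^m - p^{m-1} + \tau(p-1)p^{\frac{m}{2}-1}$ by Lemma~\ref{lem:changdu} (the $m$ even, $\alpha\neq 0$ case summed appropriately, as already recorded in the excerpt).

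The core computation is the composition vector $(\omega_\beta)_{\beta\in\mathbb{F}_p}$. Fix $a \in S$, so ${\rm Tr}(a^2)\neq 0$, and for $\beta \in \mathbb{F}_p^*$ count $\omega_\beta = \#\{d\in E : {\rm Tr}(ad)=\beta\}$. I would write this with additive characters exactly as in the proof of Theorem~\ref{th:yileiccc}:
\[
\omega_\beta = \frac{1}{p^2}\sum_{d\in\mathbb{F}_{p^m}^*}\ \sum_{u\in\mathbb{F}_p}\zeta_p^{u({\rm Tr}(ad)-\beta)}\ \sum_{v\in\mathbb{F}_p}\zeta_p^{v{\rm Tr}(d^2)},
\]
split off the $u=0$ and $v=0$ terms, and absorb the $d=0$ correction. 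The $u=0$ part gives $\frac{1}{p^2}\sum_{v}\sum_d \zeta_p^{v{\rm Tr}(d^2)}$, which by Lemma~\ref{lem:changdu} produces the main term $p^{m-2}$ plus a Gauss-sum contribution; the $v=0$ part vanishes because $\sum_u \zeta_p^{-u\beta}\sum_d\zeta_p^{u{\rm Tr}(ad)}=0$ for $\beta\neq0$ (the inner sum over $d$ kills everything). The genuinely new piece is the double sum over $u,v\in\mathbb{F}_p^*$:
\[
\frac{1}{p^2}\sum_{u\in\mathbb{F}_p^*}\sum_{v\in\mathbb{F}_p^*}\zeta_p^{-u\beta}\sum_{d\in\mathbb{F}_{p^m}}\zeta_p^{v{\rm Tr}(d^2)+u{\rm Tr}(ad)},
\]
and here I would apply Lemma~\ref{lem:ercihanshuqiuhe} with $f(d)=vd^2+uad$: the inner sum equals $\zeta_p^{-u^2a^2/(4v)\,\text{(trace)}}\,\eta(v)\,G(\eta,\chi_1)$, more precisely $\overline{\chi}_1\!\big({-u^2{\rm Tr}(a^2)/(4v)}\big)\eta(v)G(\eta,\chi_1)$ after pulling the trace of the constant out. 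Then one substitutes $G(\eta,\chi_1)$ and $\eta(v)=1$ (Lemma~\ref{lem:ercitezheng}, $m$ even) from Lemma~1, and evaluates the resulting double character sum over $u,v\in\mathbb{F}_p^*$; a change of variable $w = u^2/v$ reduces it to a Gauss sum over $\mathbb{F}_p$ times $\overline{\chi}_1$ of something, and Lemma~1's formula for $G(\overline{\eta},\overline{\chi}_1)$ closes it. The upshot should be $\omega_\beta = p^{m-2} - \tau p^{\frac{m}{2}-1}$ for every $\beta\in\mathbb{F}_p^*$, independent of $\beta$ and of the particular $a\in S$ — which is exactly what makes $\mathcal{C}_E'$ a constant composition code. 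Then $\omega_0 = n - (p-1)\omega_\beta = p^{m-2}-1$, using $n = p^{m-1}-\tau(p-1)p^{\frac{m}{2}-1}-1$.

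Finally, for the minimum distance: since $\mathcal{C}_E'\subseteq\mathcal{C}_E$, every nonzero weight of $\mathcal{C}_E'$ is a nonzero weight of $\mathcal{C}_E$, so $d \geq \min\{(p-1)p^{m-2},\ (p-1)(p^{m-2}-\tau p^{\frac{m}{2}-1})\}$, which is $(p-1)p^{m-2}$ when $\tau=-1$ and $(p-1)(p^{m-2}-p^{\frac{m}{2}-1})$ when $\tau=1$. For equality I need the minimizing weight of $\mathcal{C}_E$ to actually be attained by some $c(a)$ with $a\in S$; I would verify this by noting that the set of $a\in\mathbb{F}_{p^m}$ realizing a given weight in $\mathcal{C}_E$ is large (the frequencies in Lemma~\ref{lem:dierleizongliang} are of order $p^{m-1}$) while $\mathbb{F}_{p^m}\setminus S$ has size only $p^{m-1}-\tau(p-1)p^{\frac{m}{2}-1}$, and more carefully by observing that $wt(c(a))$ depends only on which weight class $a$ lies in, tracking whether ${\rm Tr}(a^2)=0$ or not against the computation in the proof of Lemma~\ref{lem:dierleizongliang}; the $a$'s with ${\rm Tr}(a^2)\neq0$ must populate at least one of the two nonzero weights nontrivially by a counting argument. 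The main obstacle I anticipate is the bookkeeping in the $u,v\in\mathbb{F}_p^*$ double sum — keeping the powers of $\sqrt{-1}$, the sign $\tau$, and the parity conditions consistent through the two nested Gauss-sum evaluations — rather than any conceptual difficulty; the structure is entirely parallel to Theorem~\ref{th:yileiccc} but with the quadratic ${\rm Tr}(d^2)$ in place of the linear ${\rm Tr}(d)$, which is precisely what brings Lemma~\ref{lem:ercihanshuqiuhe} and the $\mathbb{F}_p$-Gauss sum into play.
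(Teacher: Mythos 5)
Your treatment of $n$, $M$, and the composition vector $(\omega_\beta)$ follows essentially the same route as the paper: the same character-sum expansion of $\omega_\beta$ over $u,v\in\mathbb{F}_p$, the same splitting into the $u=0$, $v=0$, and $u,v\neq 0$ pieces, and the same two-stage Gauss-sum evaluation via Lemma~\ref{lem:ercihanshuqiuhe} (the paper completes the square a second time over $\mathbb{F}_p$ and lands on $G(\eta,\chi_1)G^2(\overline{\eta},\overline{\chi}_1)\overline{\eta}(-1)$; your substitution $w=u^2/v$ actually collapses the double sum to $(-1)\cdot(-1)$ without a second Gauss sum, since ${\rm Tr}(a^2)\neq 0$ for $a\in S$ --- a slightly cleaner finish, and it gives the same value $p^{m-2}-\tau p^{\frac{m}{2}-1}$). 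Your injectivity argument for $M=|S|$ is correct and is exactly what the paper leaves as ``easy to see.''

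The genuine gap is in the minimum-distance step. Because $\mathcal{C}_E'$ is not linear, its minimum distance is not governed by which \emph{weights} are attained on $S$: you need $d_H(c(a_1),c(a_2))=wt(c(a_1-a_2))$ with $a_1,a_2\in S$, so the quantity that matters is whether a minimum-weight element $a_0$ of $\mathbb{F}_{p^m}^*$ (for the weight function $a\mapsto wt(c(a))$ on $\mathcal{C}_E$) lies in the \emph{difference set} $S-S$, not in $S$ itself. Your proposed verification --- ``the minimizing weight of $\mathcal{C}_E$ is attained by some $c(a)$ with $a\in S$'' --- is neither necessary nor sufficient for this (note $0\notin S$, so a low-weight codeword of $\mathcal{C}_E'$ does not by itself produce a close \emph{pair} in $\mathcal{C}_E'$). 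Similarly, your lower bound should be justified by ``differences of distinct elements of $S$ map to nonzero codewords of the linear code $\mathcal{C}_E$,'' not by ``weights of $\mathcal{C}_E'$ are weights of $\mathcal{C}_E$.'' The correct (and easy) fix is the one the paper asserts: every $c\in\mathbb{F}_{p^m}^*$ is a difference of two elements of $S$, since $|S\cap(S+c)|\geq 2|S|-p^m=p^m-2p^{m-1}+2\tau(p-1)p^{\frac{m}{2}-1}>0$; hence $d(\mathcal{C}_E')$ equals the minimum nonzero weight of $\mathcal{C}_E$, which Lemma~\ref{lem:dierleizongliang} identifies as $(p-1)p^{m-2}$ for $\tau=-1$ and $(p-1)(p^{m-2}-p^{\frac{m}{2}-1})$ for $\tau=1$.
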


\begin{proof}
%
It is easy to see $$n=p^{m-1}-\tau(p-1)p^{\frac{m}{2}-1}-1$$ and $$ M=p^m-p^{m-1}+\tau(p-1)p^{\frac{m}{2}-1}.$$
For any $\beta\in \mathbb{F}_{p}^*$,  we have
\begin{eqnarray}\label{eq:dierleiwfenbu}
\nonumber  \omega_\beta &=& \frac{1}{p}\sum_{x\in {E}}\sum_{u\in \mathbb{F}_{p}}\zeta_p^{u({\rm Tr}(ax)-\beta)} \\
\nonumber    &=& \frac{1}{p^2}\sum_{x\in \mathbb{F}_{p^m}^*}\sum_{u\in \mathbb{F}_{p}}\zeta_p^{u({\rm Tr}(ax)-\beta)}\sum_{v\in \mathbb{F}_{p}}\zeta_p^{v{\rm Tr}(x^2)} \\
\nonumber&=& \frac{1}{p^2}\sum_{x\in \mathbb{F}_{p^m}}\sum_{u\in \mathbb{F}_{p}}\zeta_p^{u({\rm Tr}(ax)-\beta)}\sum_{v\in \mathbb{F}_{p}}\zeta_p^{v{\rm Tr}(x^2)} \\
\nonumber    &=& \frac{1}{p^2}\sum_{u\in \mathbb{F}_{p}}\sum_{v\in \mathbb{F}_{p}}\zeta_p^{-u\beta}\sum_{x\in \mathbb{F}_{p^m}}\zeta_p^{{\rm Tr}(aux+vx^2)}\\
\nonumber&=& p^{m-2}+\frac{1}{p^2}\sum_{u\in \mathbb{F}_{p}^*}\zeta_p^{-u\beta}\sum_{x\in \mathbb{F}_{p^m}}\zeta_p^{{\rm Tr}(aux)}+\frac{1}{p^2}\sum_{v\in \mathbb{F}_{p}^*}\sum_{x\in \mathbb{F}_{p^m}}\zeta_p^{{\rm Tr}(vx^2)}\\
& &+\frac{1}{p^2}\sum_{u\in \mathbb{F}_{p}^*}\sum_{v\in \mathbb{F}_{p}^*}\zeta_p^{-u\beta} \sum_{x\in \mathbb{F}_{p^m}}\zeta_p^{{\rm Tr}(aux+vx^2)}.
\end{eqnarray}
Since $a\cdot u\neq0$, then $$\sum_{x\in \mathbb{F}_{p^m}}\zeta_p^{{\rm Tr}(aux)}=0.$$
By Lemmas~\ref{lem:ercihanshuqiuhe} and \ref{lem:ercitezheng}, together with $a\in S$,  we have (\ref{eq:dierleiwfenbu}) is equal to
\begin{eqnarray*}
  \omega_\beta
&=& p^{m-2}+\frac{p-1}{p^2}G(\eta,\chi_1)+\frac{1}{p^2}\sum_{u\in \mathbb{F}_{p}^*}\sum_{v\in \mathbb{F}_{p}^*}\zeta_p^{-u\beta}  \zeta_p^{{\rm Tr}( -\frac{a^2u^2}{4v})}G(\eta,\chi_1)\\
&=& p^{m-2}+\frac{p-1}{p^2}G(\eta,\chi_1)+\frac{1}{p^2}G(\eta,\chi_1)\sum_{v\in \mathbb{F}_{p}^*} (\sum_{u\in \mathbb{F}_{p}}\zeta_p^{-\beta u- \frac{{\rm Tr}(a^2)}{4v}u^2}-1)\\
&=& p^{m-2}+\frac{1}{p^2}G(\eta,\chi_1)\sum_{v\in \mathbb{F}_{p}^*} \sum_{u\in \mathbb{F}_{p}}\zeta_p^{-\beta u- \frac{{\rm Tr}(a^2)}{4v}u^2}\\
&=& p^{m-2}+\frac{1}{p^2}G(\eta,\chi_1)\sum_{v\in \mathbb{F}_{p}^*}\overline{\eta}(-\frac{{\rm Tr}(a^2)}{4v})\zeta_p^{\frac{\beta^2v}{{\rm Tr}(a^2)}}G(\overline{\eta},\overline{\chi}_1)\\
&=& p^{m-2}+\frac{1}{p^2}G(\eta,\chi_1)G(\overline{\eta},\overline{\chi}_1)\sum_{v\in \mathbb{F}_{p}^*}\overline{\eta}(-1)\overline{\eta}(\frac{\beta^2v}{{\rm Tr}(a^2)})\zeta_p^{\frac{\beta^2v}{{\rm Tr}(a^2)}}\\
&=& p^{m-2}+\frac{1}{p^2}G(\eta,\chi_1)G(\overline{\eta},\overline{\chi}_1)\overline{\eta}(-1)\sum_{v\in \mathbb{F}_{p}^*}\overline{\eta}(v)\zeta_p^{v}\\
&=& p^{m-2}+\frac{1}{p^2}G(\eta,\chi_1)G^2(\overline{\eta},\overline{\chi}_1)\overline{\eta}(-1)\\
&=& p^{m-2}-(-1)^{(\frac{p-1}{2})^2} \overline{\eta}(-1)\tau p^{\frac{m}{2}-1}\\
&=& p^{m-2}-(-1)^{(\frac{p-1}{2})^2}(-1)^{(\frac{p-1}{2})}\tau p^{\frac{m}{2}-1}\\
&=& p^{m-2}-\tau p^{\frac{m}{2}-1}.
\end{eqnarray*}
Note that $n = p^{m-1}-\tau(p-1)p^{\frac{m}{2}-1}-1$, then
\[\omega_0=p^{m-2}-1.\]

Denote by $d_H(c(a_1),c(a_2))$ the Hamming distance of $c(a_1)$ and $(a_2)$. When $a_1$ and $a_2$ run through $S$ with $a_1\neq a_2$, then $a_1- a_2$ runs through $\mathbb{F}_{p^m}^*$. Therefore, the minimal distance of $\mathcal{C}_E'$ is the same as that of $\mathcal{C}_{E}$. By Lemma~\ref{lem:dierleizongliang}, we finish the proof.
\end{proof}
\begin{remark}
We can check that $nd-n^2+\omega_0^2+\omega_1^2+\cdots+\omega_{p-1}^2<0 $ for $\tau=\pm 1.$ Therefore, the LFVC bound cannot be applied to measure the optimality of these CCCs.
\end{remark}

In a similar way, we can  prove that $\{c(a)|a\in \mathbb{F}_{p^m}^*\setminus S \}$ is a CCC. However, this code has only $p^{m-1}-\tau (p-1)p^{\frac{m-2}{2}}-1$ codewords which is  fewer in comparison with $\mathcal{C}_E'$.
\begin{cor}\label{th:erlei}
The code  $\{c(a)|a\in \mathbb{F}_{p^m}^*\setminus S \}$     is a CCC with parameters $[n,M',d,(\omega_\beta)_{\beta\in \mathbb{F}_p}]$ where
\begin{eqnarray*}
  n &=&  p^{m-1}-\tau(p-1)p^{\frac{m}{2}-1}-1; \\
  M' &=& p^{m-1}-\tau(p-1)p^{\frac{m}{2}-1}-1; \\
  \omega_0 &=& p^{m-2}-\tau(p-1)p^{\frac{m}{2}-1}-1; \\
 \omega_\beta &=& p^{m-2} ~~for~~ any~~\beta\in \mathbb{F}_p^*;  \\
  d &=& \left\{
          \begin{array}{ll}
            (p-1)p^{m-2}, & \hbox{$\tau=-1$;} \\
            (p-1)(p^{m-2}- p^{\frac{m}{2}-1}), & \hbox{$\tau=1$.}
          \end{array}
        \right.
\end{eqnarray*}
\end{cor}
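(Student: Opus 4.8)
The plan is to observe that $\mathbb{F}_{p^m}^*\setminus S=E$, so the code in the statement is $\{c(a)\mid a\in E\}$, and then to mimic the proof of Theorem~\ref{th:erleiccc}, the only change being that its index $a$ now satisfies ${\rm Tr}(a^2)=0$ rather than ${\rm Tr}(a^2)\neq0$. That sign change is precisely what alters the weight composition while leaving the length and minimum distance inherited from $\mathcal{C}_E$ unchanged.

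I would first dispose of the easy parameters. By Lemma~\ref{lem:dierleizongliang}, $\mathcal{C}_E$ has length $n=p^{m-1}-\tau(p-1)p^{\frac{m}{2}-1}-1$ and dimension $m$; since its $p^m$ codewords arise from the $p^m$ values of $a$, the map $a\mapsto c(a)$ is injective. Hence $\{c(a)\mid a\in E\}$ consists of exactly $|E|$ pairwise distinct codewords, and since $0\notin E$ while ${\rm Tr}(0^2)=0$, Lemma~\ref{lem:changdu} (the case $m$ even, $\alpha=0$) gives $|E|=p^{m-1}-\tau(p-1)p^{\frac{m}{2}-1}-1=M'$.

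Next I would compute, for a fixed $a\in E$ and any $\beta\in\mathbb{F}_p^*$, the multiplicity $\omega_\beta=\#\{i\mid {\rm Tr}(ad_i)=\beta\}$ through the additive-character expansion
\[
\omega_\beta=\frac{1}{p^2}\sum_{x\in\mathbb{F}_{p^m}}\sum_{u\in\mathbb{F}_p}\zeta_p^{u({\rm Tr}(ax)-\beta)}\sum_{v\in\mathbb{F}_p}\zeta_p^{v{\rm Tr}(x^2)},
\]
where including $x=0$ is harmless because $\beta\neq0$ forces that term to vanish. Splitting off $u=0$ and $v=0$ exactly as in the derivation of (\ref{eq:dierleiwfenbu}): the terms with $u\neq0$ and $v=0$ vanish since $au\neq0$; and Lemmas~\ref{lem:ercihanshuqiuhe} and~\ref{lem:ercitezheng} (with $\eta(v)=1$ because $m$ is even) give $\sum_{x}\zeta_p^{v{\rm Tr}(x^2)}=G(\eta,\chi_1)$ and $\sum_x\zeta_p^{{\rm Tr}(aux+vx^2)}=\zeta_p^{{\rm Tr}(-a^2u^2/(4v))}G(\eta,\chi_1)$. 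The decisive point is that $u^2/(4v)\in\mathbb{F}_p$ and $a\in E$, so ${\rm Tr}(-a^2u^2/(4v))=-\frac{u^2}{4v}{\rm Tr}(a^2)=0$; hence the double sum over $u,v\in\mathbb{F}_p^*$ collapses to $(p-1)\sum_{u\in\mathbb{F}_p^*}\zeta_p^{-u\beta}=-(p-1)$, which cancels the $\frac{p-1}{p^2}G(\eta,\chi_1)$ coming from the terms with $u=0$, $v\neq0$. This leaves $\omega_\beta=p^{m-2}$ for every $\beta\in\mathbb{F}_p^*$, whence $\omega_0=n-(p-1)p^{m-2}=p^{m-2}-\tau(p-1)p^{\frac{m}{2}-1}-1$.

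For the minimum distance I would note that $d_H(c(a_1),c(a_2))$ equals the Hamming weight of $c(a_1-a_2)$, and that by the argument already used in Theorem~\ref{th:erleiccc}, as $a_1,a_2$ range over $E$ with $a_1\neq a_2$ the difference $a_1-a_2$ runs through all of $\mathbb{F}_{p^m}^*$; hence the minimum distance of $\{c(a)\mid a\in E\}$ coincides with that of $\mathcal{C}_E$, which by Lemma~\ref{lem:dierleizongliang} equals $(p-1)p^{m-2}$ when $\tau=-1$ and $(p-1)(p^{m-2}-p^{\frac{m}{2}-1})$ when $\tau=1$. The one genuine computation is the evaluation of $\omega_\beta$, and the observation that unlocks it is ${\rm Tr}(a^2)=0$; the only other input is that the differences $a_1-a_2$ exhaust $\mathbb{F}_{p^m}^*$, which is exactly the claim already invoked in Theorem~\ref{th:erleiccc}.
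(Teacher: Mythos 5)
Your proposal is correct and is precisely the "similar way" the paper alludes to: the paper gives no explicit proof of this corollary, only remarking that it follows by the argument of Theorem~\ref{th:erleiccc}, and your computation carries that argument through with the one essential change, namely that ${\rm Tr}(a^2)=0$ collapses the double character sum to $-(p-1)G(\eta,\chi_1)$, which cancels the $(u=0,v\neq0)$ contribution and yields $\omega_\beta=p^{m-2}$ for $\beta\neq0$. The remaining parameters ($n$, $M'$, $\omega_0$ by complementary counting, and $d$ via the difference argument) are obtained exactly as in the paper.
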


\section{Conclusion}
In this paper, we obtained several classes of CCCs with exact parameters.
One of them is optimal based on LFVC bound. However, the other  CCCs cannot be measured by LFVC bound.
The optimality of these   CCCs is not clear. At the end, we mention that CCCs over $\mathbb{F}_p$ obtained in this paper can be generalized to the case of  CCCs over $\mathbb{F}_q$.

\bigbreak

\bigbreak
\textbf{Acknowledgment}

The authors would like to thank the anonymous referees for their comments that improved the
presentation of this paper. The work of L. Yu was support by  research
funds of HBPU(Grant No. 17xjz04R).



\end{document}